\newtheorem{claim}{Claim}
\newproof{proof}{Proof}
\newdefinition{definition}{Definition}
\newdefinition{example}{Example}
\journal{arXiv}
\begin{document}

\begin{frontmatter}

%% Title, authors and addresses

%% use the tnoteref command within \title for footnotes;
%% use the tnotetext command for theassociated footnote;
%% use the fnref command within \author or \address for footnotes;
%% use the fntext command for theassociated footnote;
%% use the corref command within \author for corresponding author footnotes;
%% use the cortext command for theassociated footnote;
%% use the ead command for the email address,
%% and the form \ead[url] for the home page:
%% \title{Title\tnoteref{label1}}
%% \tnotetext[label1]{}
%% \author{Name\corref{cor1}\fnref{label2}}
%% \ead{email address}
%% \ead[url]{home page}
%% \fntext[label2]{}
%% \cortext[cor1]{}
%% \address{Address\fnref{label3}}
%% \fntext[label3]{}

\title{Networks of planar Hamiltonian systems}

%% use optional labels to link authors explicitly to addresses:
%% \author[label1,label2]{}
%% \address[label1]{}
%% \address[label2]{}

\author{David S. Tourigny}
\ead{dst27@cam.ac.uk}
\address{Department of Applied Mathematics \& Theoretical Physics \\ University of Cambridge \\ Wilberforce Road \\ Cambridge CB3 0WA, United Kingdom}

\begin{abstract}
%% Text of abstract
We introduce diffusively coupled networks where the dynamical system at each vertex is planar Hamiltonian. The problems we address are synchronisation and an analogue of diffusion-driven Turing instability for time-dependent homogeneous states. As a consequence of the underlying Hamiltonian structure there exist unusual behaviours compared with networks of coupled limit cycle oscillators or activator-inhibitor systems.  
\end{abstract}

\begin{keyword}
%% keywords here, in the form: keyword \sep keyword
complex networks \sep activator-inhibitor \sep synchronisation \sep pattern formation
%% PACS codes here, in the form: \PACS code \sep code

%% MSC codes here, in the form: \MSC code \sep code
%% or \MSC[2008] code \sep code (2000 is the default)

\end{keyword}

\end{frontmatter}

%% \linenumbers

%% main text
\section{Introduction}

In their modern-day form, dynamical systems on complex networks have captured the interest of researchers for several decades \cite{Str01,Boc06,Bar08}. Applications span a wide variety of disciplines ranging from the social, biological and neurological sciences, all the way to computing, engineering and the structure of the World Wide Web. Although the study of complex networks can mean different things to different scientists, one of most prevalent concepts is that a network encompasses a set of rules for joining together many autonomous units to produce a high-dimensional dynamical system. In practice, each individual unit is itself a dynamical system of dimension much smaller than the network as a whole, and the precise way these are coupled together is encoded by a combinatorial graph. We shall use the term ``complex network'' when referring to a dynamical system constructed in this way, but it can also be used to describe the combinatorics itself, e.g. ``a dynamical system {\em on} a complex network''. Within this class of dynamical systems are complex networks whose autonomous units admit oscillatory solutions. Arrays of coupled limit cycle oscillators \cite{Win67,Kur84, Are08} are important examples.

Coupled limit cycle oscillators have been used to study phenomena associated with temporal periodic behaviour, such as synchronisation \cite{Are08}. Beyond the various forms of synchronisation there also exist notions of amplitude and oscillation death \cite{Kos13}. The latter can be viewed as an oscillatory analogue of Turing's activator-inhibitor model for pattern formation on complex networks \cite{Tur52,Nak10}. A basic assumption in each of these cases is that all autonomous units of the network admit attractors. Namely, asymptotically stable limit cycles or asymptotically stable equilibria for synchronisation or pattern formation, respectively. Limit cycles allow the dynamics of synchronisation to be well-approximated by coupled phase oscillators \cite{Win67,Kur84} whereas stable equilibria enable one to define an unpatterned state of a network \cite{Tur52,Nak10}. For complex networks whose autonomous units do not admit attractors, notions of synchronisation and pattern formation must be modified appropriately. Temporal behaviour of these complex networks may be very different from that of their canonical counterparts.  

Synonymous with time evolution in classical mechanics are dynamical systems of Hamiltonian form. Hamiltonian systems famously do not admit attractors, but rather families of periodic orbits parameterised by level sets of the Hamiltonian. There have been surprisingly few attempts to study complex networks consisting of coupled Hamiltonian systems. This is perhaps due to an absence of attractors or the applications of coupled limit cycle oscillators being so prolific. Consequently, in this paper we initiate the study of complex networks whose autonomous units are planar Hamiltonian systems. We choose to focus on planar Hamiltonian systems because already there exists a rich theory that may be identified with the geometry of planar algebraic curves. After introducing the general model we begin to investigate synchronisation and pattern formation in this new class of complex networks.                           

\section{General theory}

\subsection{Planar Hamiltonian systems}
Planar systems are described by $C^1$-vector fields on the plane $\mathbb{R}^2$. In coordinates $(x,y)$ they take the form
\begin{equation} \label{plane} \dot{x} = f(x,y) , \quad \dot{y} = g(x,y) , \end{equation}
where $f,g$ are $C^1$-functions of $(x,y)$. A special class of planar systems are described by those vector fields whose divergence vanishes identically and can therefore give rise to periodic orbits in any region of the plane. Clearly this is satisfied by taking $f=\partial h/ \partial y$ and $g=-\partial h/\partial x$, where $h$ is a smooth (at least $C^2$) function $h: \mathbb{R}^2 \to \mathbb{R}$, and so we are led to {\em planar Hamiltonian systems}
\begin{equation} \label{ham} \dot{x} = \frac{\partial h(x,y) }{\partial y} , \quad \dot{y} =- \frac{\partial h (x,y)}{\partial x}. \end{equation}
Planar hamiltonian systems and their $2N$-dimensional relatives have a rich associated theory. A simple consequence of their form is that the Hamiltonian $h$ is preserved in time meaning it defines an {\em integral of motion} (it is easy to check that $\dot{h} = 0$ using the chain rule). If a $2N$-dimensional Hamiltonian admits $N$ linearly-independent integrals of motion then the system it defines is said to be {\em integrable} (in the Liouville sense). Planar Hamiltonian systems are unique in the sense that the Hamiltonian always provides the required $N=1$ integral of motion and so {\em every autonomous planar Hamiltonian system is also integrable}.    

Trajectories of the system (\ref{ham}) can be parameterised by the value of $h(x,y)$ and periodic solutions correspond to closed components of the curve $h(x,y) = \mbox{const}$. This combination of integrability and periodicity implies that there exists a special coordinate transformation to {\em action-angle variables} $ (x,y) \to (I, \theta)$ such that $h(x,y) = h(I)$, i.e., in these coordinates the Hamiltonian depends only on one variable, $I$. The equations transform as
\[ \dot{I} = \frac{\partial h(I) }{\partial \theta} = 0 , \quad \dot{\theta} =- \frac{\partial h (I)}{\partial I} \equiv \omega(I) \]
and one sees that the action variable $I \in \mathbb{R}_{\geq 0 }$ remains constant whilst the angle variable $\theta \in S^1$ evolves according to $\theta(t) = \omega(I)t + \theta(0)$. Roughly this result can be summarised by saying that, if a planar Hamiltonian  system begins on a periodic orbit labelled by $I = I(0)$ at time $t = 0$, then it remains on that periodic orbit for all time where it evolves with the characteristic frequency $\omega(I)$. The fact that trajectories of a planar Hamiltonian vector field are parameterised by curves in the level set of the Hamiltonian is a good indication of the close ties between these systems and planar geometry.

The structure of Hamiltonian systems can sometimes mean that conventional methods for the analysis of generic dynamical systems fail. An important illustration of this arises when attempting to apply the Hartman-Grobman Theorem to evaluate the stability of an equilibrium point. Suppose that the planar system (\ref{ham}) admits an equilibrium solution $(x^*,y^*)$, which we see must be equivalent to the statement that $h$ has at least one critical point. Calculating the Jacobian of the linearisation of (\ref{ham}) around $(x^*,y^*)$ we find that eigenvalues come in pairs
\[ \lambda_{\pm} = \pm \sqrt{-\mbox{det}[\mbox{Hess}(h)(x^*,y^*)]} ,\] 
where $\mbox{Hess}(h)(x^*,y^*)$ is the Hessian of the Hamiltonian evaluated at $(x^*,y^*)$. If the determinant is zero or positive the eigenvalues form a purely imaginary conjugate pair and the Hartman-Grobman Theorem does not apply. In this case the equilibrium is called a {\em centre}. The inability to classify Hamiltonian centres as asymptotically stable fits into a larger scheme that arises as a consequence of a system having Hamiltonian structure, namely that {\em Hamiltonian systems can not have asymptotically stable (attracting) sets}. This peculiar property distinguishes periodic Hamiltonian systems from generic dynamical systems that admit limit cycles, and necessarily implies a centre must be surrounded by a family of periodic orbits. Existence of these families of periodic orbits will turn out to be a major factor distinguishing networks of planar Hamiltonian systems when we turn to synchronisation.    

\subsection{Complex networks}
The term ``network'' has been much bandied in the literature, but here we shall use a self-contained definition. For us a {\em network} is a particular type of dynamical system specified by three ingredients: 

1) a pair $(V,W)$ consisting of a finite set $V=\{1,2,...,n\}$ and a function $W: V \times V \to  \mathbb{R}$;    

2) a set of $n$ autonomous $m$-dimensional dynamical systems $\{\dot{z}_i = F_i(z_i)\}$;

3) a pairwise ``coupling function'' $U: \mathbb{R}^m \times \mathbb{R}^m \to \mathbb{R}^m$.  
\\ 
These data are enough to define a weighted graph $\mathcal{G}$ by assigning  a vertex to each $i \in V$ and an edge with ``weight'' $W_{ij} = W(i,j)$ joining vertex $i$ to $j$. We assume this graph $\mathcal{G}$ is {\em undirected} (i.e., $W_{ij} = W_{ji}$) and {\em connected} meaning that there always exists a path from any vertex $i$ to any other vertex $j$. The network is then defined to be the dynamical system on $\mathbb{R}^{m \times n}$ whose time evolution is governed by the equations
\[ \dot{z}_i = F_i(z_i) + \sum_{j=1}^n W_{ij} U(z_i,z_j) \ . \]  
In this paper we will actually only be concerned with a small subset of networks where the autonomous units are planar ($m = 2$) and the coupling function is {\em diffusive}, implying $U(z_i,z_j) = U(z_i - z_j)$. Finally, with this is mind we say that a network is an {\em oscillator network} if each of the underlying systems $\dot{z_i} = F_i(z_i)$ admits at least one periodic solution.    

Networks of the above type have appeared at multiple points throughout history. The two best-studied examples are networks of limit cycle oscillators and activator-inhibitor networks, which form canonical models for synchronisation and pattern formation, respectively. Briefly, networks of limit cycle oscillators were popularised  by A.T. Winfree \cite{Win67} to study synchronisation of biological oscillations. In Winfree's model each $F_i$ admits a limit cycle solution and coupling is assumed to be ``weak''. His achievement was to show that with this assumption the amplitudes of limit cycles can be neglected and networks reduce to a system of coupled phase oscillators. Later, Kuramoto expanded on Winfree's work and demonstrated that if coupling is approximately diffusive one can find an analytically tractable solution \cite{Kur84,Are08}. In contrast, activator-inhibitor networks with diffusive coupling were introduced by A. Turing to understand a different biological phenomenon \cite{Tur52}. In his model all the underlying planar systems are the same ($F_i = F$ for all $i \in V$) and $U$ is linear, so the modern-day form \cite{Nak10} of Turing's equations is
\[ \dot{x_i} =  f(x_i,y_i) + D_x \sum_{j=1}^n W_{ij} (x_j-x_i) , \quad \dot{y_i} =  g(x_i,y_i) + D_y \sum_{j=1}^n W_{ij} (y_j-y_i) .\]
Turing showed that it is possible for a uniformly distributed equilibrium solution, $(x_i,y_i) = (x^*,y^*)$ for all $i \in V$, to spontaneously destabilise in the presence of nonzero diffusion coefficients $D_x,D_y>0$ and suggested this as a model for pattern formation.    

Networks considered by Winfree and Turing are at opposite ends of a wide spectrum of examples. Those introduced by Winfree are examples of {\em heterogeneous networks} where the autonomous planar system at each vertex is {\em different}. Turing's networks are {\em homogeneous networks} where the autonomous planar system at each vertex is {\em the same}. Winfree studied stability of a synchronised state whereas Turing derived the conditions for instability. Clearly the same questions can be asked in reverse, and indeed some researchers have already attempted to do so. However, all these examples rely on the autonomous dynamical units admitting (constant or periodic) attractors. We can therefore expect very different behaviour when the underlying planar systems are Hamiltonian.                 

\subsection{Model}
In this paper we will study networks with linear diffusive coupling where the underlying system at each vertex is described by a planar Hamiltonian. More specifically, we are interested in networks of the form
\begin{equation} \label{master}
\dot{x_i} = \frac{\partial h_i(x_i,y_i)}{\partial y_i} + D_x \sum_{j=1}^n W_{ij} (x_j-x_i), \quad \dot{y_i} = -\frac{\partial h_i(x_i,y_i)}{\partial x_i} + D_y \sum_{j=1}^n W_{ij} (y_j-y_i) .
\end{equation}
This dynamical system is a discretised reaction-diffusion equation and, if the variables $x_i,y_i$ represent concentrations of a pair of reactants $x,y$ at vertex $i$, the positive constants $D_x,D_y \in \mathbb{R}_{\geq 0 }$ can be interpreted as diffusion coefficients of $x,y$, respectively. We have already assumed that the network's weighted graph, $\mathcal{G}$, is undirected and connected. The weighted Laplacian associated with $\mathcal{G}$ is defined as the matrix having components $\Delta_{ij} = W_{ij}$ ($i \neq j$) and $\Delta_{ii} =- \sum_{j=1}^n W_{ij}$. The assumptions on $\mathcal{G}$ imply that the weighted Laplacian is negative-semidefinite with an orthonormal basis of eigenvectors $\{\xi^j\}$ corresponding to eigenvalues $\{\lambda_j\}$ and a kernel spanned by the normalised eigenvector $\xi_1 = (1/\sqrt{n},1/\sqrt{n},...,1/\sqrt{n})^T$. The subscript on $h_i$ indicates that in general the planar Hamiltonian system defining the reaction at each vertex of $\mathcal{G}$ need not be the same
\begin{equation} \label{planar}
\dot{x} = \frac{\partial h_i(x,y)}{\partial y}, \quad \dot{y} = -\frac{\partial h_i(x,y)}{\partial x} ,
\end{equation}
and so the general network is heterogeneous. When restricting to homogeneous networks we take $h_i(x,y) = h(x,y)$ for all $i \in V$, in which case (\ref{master}) corresponds to $n$ copies of the same underlying planar system (\ref{ham}). When $D_x=D_y = 0$ the total system (\ref{master}) uncouples and is conservative with Hamiltonian
\begin{equation} \label{totalham} H(x_1,y_1,...,x_n,y_n) = \sum_{i=1}^n h_i(x_i,y_i) . \end{equation} 
Uncoupled networks are integrable by the simple fact that each planar system is individually integrable and so the total system admits $n$ invariants of motion, $h_i(x_i,y_i)$. However, when $D_x, D_y > 0$ the coupled system (\ref{master}) is no longer Hamiltonian by construction.  

Networks related to (\ref{master}) have appeared in the literature several times before. We describe two important examples that are, to the best of our knowledge, the closest analogues of (\ref{master}). First is the class of systems introduced by Smereka \cite{Sme98} in the search for a Hamiltonian version of the Kuramoto model \cite{Kur84}. Together with those considered by Zanette-Hampton-Mikhailov \cite{Zan97,Ham99}, De Nigris-Leoncini \cite{Nig13,Nig15}, and Virkar-Restrepo-Meiss \cite{Vir15}, these consist of planar Hamiltonian systems coupled in such a way that the total network remains Hamiltonian. Our networks would fall into this class if we were to allow negative values for diffusion coefficients and take $D_x = -D_y$, because then (\ref{master}) can be written as Hamiltonian system of dimension $2n$ with Hamiltonian
\begin{equation} H(x_1,y_1,...,x_n,y_n) = \sum_{i=1}^n h_i(x_i,y_i) + D_x \sum_{i,j=1}^n W_{ij} (x_i-x_j)(y_i - y_j) . \end{equation}
The key difference between our networks and theirs is that \emph{complete synchronisation} of the latter would violate Liouville's Theorem, an important theorem in symplectic geometry concerning the preservation of phase space volume under the flow of a Hamiltonian vector field. Consequently, the authors of \cite{Ham99} introduced the concept of \emph{measure synchronisation} for the special case that the total network remains Hamiltonian. Conversely, in the next section we will demonstrate complete synchronisation for different choices of $h_i$ in (\ref{master}), which generalise the networks of coupled harmonic oscillators studied by Ren \cite{Ren08}. Ren's network is a homogeneous version of (\ref{master}) where one has $D_x=0$ and $h(x,y) = (y^2 + \omega^2 x^2)/2$ for some constant $\omega$. In these cases complete synchronisation is allowed because positive diffusion coefficients generate a flow that exponentially contracts the volume of phase space. 

Starting with steady-state solutions, because of $\lambda_1=0$ it will never be possible to establish asymptotic stability of a homogeneous equilibrium $(x_i,y_i) = (x^*,y^*)$ for all $i \in V$. For this solution not to be unstable it must be a centre of the underlying planar system, but this necessarily implies the Jacobian of the linearisation of (\ref{master}) must also have a pair of purely imaginary eigenvalues. Thus, the Hartman-Grobman Theorem also fails to apply in this case and the best one can hope to do is establish that instability emerges in the presence of diffusion. This is very similar to the famework of pattern formation in Turing's activator-inhibitor networks \cite{Tur52,Nak10}. For oscillations on the other hand, we have already described a change of coordinates $ (x_i,y_i) \to (I_i, \theta_i)$ such that $h_i(x,y) = h_i(I)$. Performing this transformation on the networks (\ref{master}) yields
\begin{equation} \label{actionangle}
\dot{I}_i = \sum_{j=1}^n \Delta_{ij} \left[ D_x \frac{\partial I_i}{\partial x_i} x_j + D_y \frac{\partial I_i}{\partial y_i} y_j  \right],  \quad \dot{\theta}_i = \omega_i(I_i) + \sum_{j=1}^n \Delta_{ij} \left[ D_x \frac{\partial \theta_i}{\partial x_i} x_j + D_y \frac{\partial \theta_i}{\partial y_i} y_j  \right]
\end{equation}
and since the $\dot{I}_i$ are not identically zero we \emph{do not} have $I_i = \mbox{const}$. Instead we can formally think of $I_i(t)$ parameterising a family of periodic orbits underlying the planar system (\ref{planar}) and at any particular time $t$ say that the system at vertex $i$ is in phase $\theta_i(t)$ of orbit $I_i(t)$. The doubling of phase space to accommodate the additional variables $I_i$ is summarised as follows: in networks of limit cycle oscillators each vertex $i$ arrives at the same periodic orbit (the limit cycle) in the absence of coupling, and only an adjustment of phase $\theta_i$ is required for synchronisation. In networks of planar Hamiltonian systems each vertex $i$ remains on a different periodic orbit in the absence of coupling and therefore both phase $\theta_i$ and periodic orbit $I_i$ must be adjusted for synchronisation. Thus, whilst networks of limit cycle oscillators can be reduced to simple models of coupled phase oscillators, networks of Hamiltonian oscillators can not. Moreover, if the system (\ref{master}) does synchronise it is not immediately clear which periodic orbit will result.

\section{Synchronisation}
In the previous section we discussed synchronisation without recourse to a concrete definition. Throughout the remainder of this paper will reserve the term synchronisation for what is commonly referred to as complete synchronisation, namely a solution 
\[ (x_1,x_2,...,x_n,y_1,y_2, ...,y_n) : \mathbb{R}_{\geq 0} \to \mathbb{R}^{2n} \]
to (\ref{master}) achieves {\em synchronisation} if all pairs $(x_i(t),y_i(t))$ become identical as $t \to \infty$. One should be aware that there are related notions of frequency synchronisation and general synchronisation that we shall not describe here. Synchronisation in the context defined above is usually only possible for homogenous networks of diffusively coupled oscillator of the form
\begin{equation} \dot{z}_i = F(z_i) + \sum_{j=1}^n W_{ij} U(z_i - z_j) . \end{equation}
Stability of a synchronous state $z_i(t) = s(t)$ for all $i \in V$, where $s(t)$ is a solution of the underlying system $\dot{s} = F(s)$, can be formulated in terms of the {\em master stability function} \cite{Pec98}. 

For networks of limit cycle oscillators we have already explained that $s(t)$ will always be an isolated limit cycle of $\dot{z}=F(z)$. When this system is planar Hamiltonian of the form (\ref{ham}) however, there will necessarily exist a family of periodic orbits and amongst these it is not at all clear to which oscillatory solution $s(t)$ corresponds. Families of periodic orbits are parameterised by constant values of $h(x,y)$ and will in general have a very complicated structure within different regions of the plane. To attack these problems, in this section we shall restrict ourselves to Hamiltonians of the form $h(x,y) = y^2/2 + P_k(x)$ where $P_k$ is a polynomial of degree $k$. The level curves of $h$ are rational for $k=1,2$, elliptic for $k=3,4$ and hyperelliptic for $k\geq 5$. We assume $k >1$ since the level curves have no closed components if $k=1$. In this case the quadratic theory $k=2$ produces a linear network for which we can solve and deduce the properties of synchronisation exactly. Subsequently, we shall present some numerical calculations for nonlinear networks, considering cases $k=3,4$ in particular, together with a few brief remarks about the general nonlinear theory.       
       
\subsection{Quadratic curves and linear theory} \label{sslinear}
In the case $k=2$ periodic orbits corresponding to level sets of the Hamiltonian
\begin{equation} \label{quadratic} h(x,y) = \frac{1}{2}y^2 + \frac{\omega^2}{2}x^2  \quad \omega \in \mathbb{R} \end{equation}
form a set of concentric ellipses surrounding an isolated centre at $h(x,y) =0$. The resulting network (\ref{master}) is linear and can therefore be solved exactly. However, we prefer to tease out more details to help us understand the nonlinear case. When calculating eigenvalues we consider a general quadratic Hamiltonian rather than (\ref{quadratic}) to uncover a set of relations that play an important role when we turn to pattern formation in the next section.

We begin by establishing a general synchronisation result for linear planar systems. In particular, we consider linear systems that in vector form with $\mathbf{x} = (x_1,x_2,...,x_n)^T$, $\mathbf{y} = (y_1,y_2,...,y_n)^T$ are given by
\begin{equation} \label{linear} \left( \begin{array}{c} \dot{\mathbf{x}}  \\ \dot{\mathbf{y}}  \end{array} \right) = \left( \begin{array}{cc} a\cdot \mathbf{I}_n+D_x \Delta  & b\cdot \mathbf{I}_n \\ c\cdot \mathbf{I}_n & d\cdot \mathbf{I}_n + D_y \Delta  \end{array} \right)  \left( \begin{array}{c} \mathbf{x}  \\ \mathbf{y} \end{array} \right)   \quad a,b,c,d \in \mathbb{R} \end{equation}
where $ \mathbf{I}_n$ is the $n \times n$ identity matrix. Assuming this linear network synchronises we ask: to which solution of the underlying planar system
\begin{equation} \label{short}
\dot{x} = ax+by, \quad \dot{y} = cx+dy 
\end{equation} 
does the system (\ref{linear}) synchronise? The answer to this question is reassuringly pleasant.
\begin{claim} \label{average}
Let $\{x_j(0),y_j(0)\}_{j=1,2,...,n}$ be a set of initial conditions for the linear network (\ref{linear}). If the solution to (\ref{linear}) with these initial conditions synchronises, then it synchronises to the solution of the planar system (\ref{short}) with initial conditions $x(0) = \frac{1}{n} \sum_{j=1}^n x_j(0)$, $y(0) = \frac{1}{n} \sum_{j=1}^n y_j(0)$. 
\end{claim}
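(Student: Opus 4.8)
The plan is to track the mean-field variables $\bar{x}(t) = \frac{1}{n}\sum_{j=1}^n x_j(t)$ and $\bar{y}(t) = \frac{1}{n}\sum_{j=1}^n y_j(t)$ and show that the pair $(\bar{x}(t),\bar{y}(t))$ solves the planar system (\ref{short}) \emph{exactly}, for all $t$, not merely asymptotically. First I would observe that the all-ones vector $\mathbf{1} = (1,1,\dots,1)^T$ lies in the kernel of the weighted Laplacian $\Delta$ (its rows sum to zero), and since $\Delta$ is symmetric this gives $\mathbf{1}^T\Delta = (\Delta\mathbf{1})^T = 0$. Left-multiplying the top block row of (\ref{linear}) by $\frac{1}{n}\mathbf{1}^T$ therefore annihilates the diffusion term, yielding $\dot{\bar{x}} = a\bar{x} + b\bar{y}$; doing the same with the bottom block row gives $\dot{\bar{y}} = c\bar{x} + d\bar{y}$. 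Hence $(\bar{x}(t),\bar{y}(t))$ is precisely the solution of (\ref{short}) with initial data $\bar{x}(0) = \frac{1}{n}\sum_{j} x_j(0)$, $\bar{y}(0) = \frac{1}{n}\sum_{j} y_j(0)$, by uniqueness of solutions to the linear initial value problem.

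Next I would invoke the hypothesis that the network synchronises. By definition this means $x_i(t) - x_j(t) \to 0$ and $y_i(t) - y_j(t) \to 0$ as $t \to \infty$ for every pair $i,j \in V$. Averaging the first of these over $j$ gives $\bar{x}(t) - x_i(t) = \frac{1}{n}\sum_{j}\bigl(x_j(t) - x_i(t)\bigr) \to 0$, and similarly $\bar{y}(t) - y_i(t) \to 0$. Thus each $(x_i(t),y_i(t))$ approaches the common trajectory $(\bar{x}(t),\bar{y}(t))$, which by the previous step is exactly the solution of (\ref{short}) with the averaged initial conditions. This establishes the claim.

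There is essentially no analytical obstacle; the one point deserving care is that the mean is an \emph{exact} invariant of the coupled flow, which is what fixes the initial condition of the limiting planar solution — had the mean only been conserved asymptotically, one could not identify which orbit of (\ref{short}) is selected. I would also remark that connectedness of $\mathcal{G}$ is not used in this particular statement, only the relation $\mathbf{1}^T\Delta = 0$; connectedness and negative-semidefiniteness of $\Delta$ will instead enter when one asks \emph{whether} synchronisation occurs, via the transverse eigenmodes $\xi^2,\dots,\xi^n$.
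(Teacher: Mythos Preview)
Your proof is correct and rests on the same idea as the paper's: project onto the zero eigenmode of $\Delta$ to see that the common component evolves exactly by (\ref{short}), then use the synchronisation hypothesis to conclude every vertex trajectory approaches that component. The only difference is presentational. The paper carries out the full spectral decomposition $\mathbf{x}=\sum_j u_j(t)\xi^j$, $\mathbf{y}=\sum_j v_j(t)\xi^j$, which yields $n$ decoupled planar systems and identifies synchronisation with the decay $u_j,v_j\to 0$ for $j>1$; your mean $(\bar{x},\bar{y})$ is precisely $(u_1/\sqrt{n},\,v_1/\sqrt{n})$ in that notation. Your route is more economical for this claim since it needs only the single relation $\mathbf{1}^T\Delta=0$, whereas the paper's full decomposition is overhead here but is reused immediately afterwards (Claim~2 and the eigenvalue classification) to decide \emph{when} synchronisation actually occurs. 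Your closing remark that connectedness is not needed for this statement, only for the subsequent stability analysis of the transverse modes, is also accurate.
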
 
\begin{proof}
We decompose solutions of (\ref{linear}) using orthonormal eigenvectors $\xi^j$ of the weighted Laplacian matrix $\Delta$ corresponding to eigenvalues $\lambda_j$ (chosen so that $\Delta \xi_j = - 2\lambda_j \xi_j$),
\[ \mathbf{x}  = \sum_{j=1}^n u_j(t) \xi^j, \quad \mathbf{y}  = \sum_{j=1}^n v_j(t) \xi^j, \]
and obtain $n$ uncoupled pairs of planar systems
\[ \dot{u}_j = (a-2\lambda_j D_x) u_j + b v_j, \quad \dot{v}_j = cu_j +(d-2 \lambda_j D_y) v_j .\]
Recall that the eigenvector corresponding to the simple eigenvalue $\lambda_1=0$ is $\xi^1= (\frac{1}{\sqrt{n}}, ..., \frac{1}{\sqrt{n}})^T$ and so the system approaches a synchronous state as $t\to \infty$ if and only if $u_j(t),v_j(t) \to 0$ for all $j>1$. Thus $x_j(t) \to u_1(t)/\sqrt{n}$ and $y_j(t) \to v_1(t)/\sqrt{n}$ for all $j$, where $(u_1(t),v_1(t))$ is a solution of (\ref{short}). To find the initial conditions that determine this solution, simply observe that by definition of the decomposition we have
\[ \frac{1}{\sqrt{n}}u_1(0) = \frac{1}{\sqrt{n}} \langle \mathbf{x}(0) ,\xi_1  \rangle = \frac{1}{n} \sum_{j=1}^n x_j(0) , \quad \frac{1}{\sqrt{n}}v_1(0) = \frac{1}{\sqrt{n}} \langle \mathbf{y}(0) ,\xi_1  \rangle = \frac{1}{n} \sum_{j=1}^n y_j(0) . \]
\qed
\end{proof}

We next turn to the homogeneous version of the network (\ref{master}) equipped with a general quadratic Hamiltonian
\[  h(x,y)=\frac{1}{2}(Ax^2+2Bxy+Cy^2), \]
which becomes of the form (\ref{linear}) if $a=B$, $b = C$, $c=-A$, and $d=-B$. In this case the underlying planar system (\ref{planar}) is well known to admit periodic solutions when $AC-B^2>0$, and we therefore assume this inequality throughout. Decomposing solutions following the proof of Claim \ref{average} again results in a system of uncoupled equations whose eigenvalues come in pairs
\[ \Lambda^\pm_j = -(D_x+D_y)\lambda_j \pm \sqrt{(D_x+D_y)^2 \lambda_j^2-k(\lambda_j)} ,\]
where $k(\lambda_j)=4  D_xD_y \lambda_j^2+ 2 B (D_x-D_y)\lambda_j  + AC - B^2$ and corresponding eigenvectors are
\[ \mathbf{k}^\pm_j =  \left( \begin{array}{c} 1  \\ (\Lambda^\pm_j - B + 2 \lambda_j D_x)/C \end{array} \right) . \]
There is a purely imaginary pair $\Lambda^{\pm}_1$ corresponding to $\lambda_1=0$ given by
\[ \Lambda^{\pm}_1 = \pm i \sqrt{AC-B^2} , \]
but for the remaining $\lambda_2, ..., \lambda_n>0$ the pairs $\Lambda^{\pm}_j$ can be separated into five classes depending on the value of $k(\lambda_j)$:

1) case $k(\lambda_j)<0$: $\Lambda^{\pm}_j$ are real with $\Lambda^+_j>0$, $\Lambda^-_j <0$;

2) case $k(\lambda_j)=0$: $\Lambda^{\pm}_j$ are real with $\Lambda^+_j=0$, $\Lambda^-_j <0$;

3) case $0<k(\lambda_j)< (D_x+D_y)^2\lambda_j^2$: $\Lambda^{\pm}_j$ are real with $\Lambda^+_j<0$, $\Lambda^-_j <0$;

4) case $k(\lambda_j)=(D_x+D_y)^2\lambda^2_j$: $\Lambda^{\pm}_j$ are real with $\Lambda^+_j = \Lambda^-_j <0$;

5) case $(D_x+D_y)^2\lambda^2_j < k(\lambda_j)$: $\Lambda^{\pm}_j$ are complex conjugates with $Re(\Lambda^{\pm}_j) < 0$.
\\
The real parts of these eigenvalue pairs govern the large time behaviour of the solutions $u_j(t),v_j(t)$ and if $Re(\Lambda^{\pm}_j)<0$ then $u_j(t),v_j(t) \to 0 $ as $t\to \infty$. The following is therefore a modest generalisation of Ren's result \cite{Ren08} for coupled harmonic oscillators and answers the question posed at the beginning of this subsection.
\begin{claim}
Let $\{x_j(0),y_j(0)\}_{j=1,2,...,n}$ be a set of initial conditions for the homogeneous network (\ref{master}) with quadratic hamiltonian (\ref{quadratic}). Then, as $ t \to \infty$, we have
\[ x_j(t) \to \bar{x}(0) \cos \omega t +\frac{\bar{y}(0)}{\omega}  \sin \omega t , \quad  y_j(t) \to \bar{y}(0) \cos \omega t - \bar{x}(0) \omega \sin \omega t , \]
for all $j \in V$ where
\[ \bar{x}(0) = \frac{1}{n} \sum_{i=1}^n x_i(0) , \quad \bar{y}(0) = \frac{1}{n} \sum_{i=1}^n y_i(0) . \]

\end{claim}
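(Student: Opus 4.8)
The plan is to recognise this claim as the conjunction of two things already assembled above: that the network (\ref{master}) with Hamiltonian (\ref{quadratic}) genuinely synchronises, and that — by Claim~\ref{average} — once it does, it synchronises to the orbit of the underlying planar system launched from the averaged initial data. So I would first specialise the general quadratic calculation to (\ref{quadratic}), i.e.\ set $A=\omega^2$, $B=0$, $C=1$, so that $a=0$, $b=1$, $c=-\omega^2$, $d=0$; the standing hypothesis $AC-B^2>0$ becomes $\omega^2>0$, which holds since $\omega\neq 0$ is necessary for (\ref{quadratic}) to possess closed level curves, and the planar system (\ref{short}) is then the harmonic oscillator $\dot x=y$, $\dot y=-\omega^2 x$.

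Next I would invoke the eigenvalue pairs $\Lambda^\pm_j$ listed just above the claim. With $B=0$ one has $k(\lambda_j)=4D_xD_y\lambda_j^2+\omega^2>0$ for every $j$, so for each $j\geq 2$ (where $\lambda_j>0$ by connectedness of $\mathcal{G}$) the pair $\Lambda^\pm_j$ falls into case~3, 4 or 5, and in all three $\mathrm{Re}(\Lambda^\pm_j)<0$ (using $D_x+D_y>0$). Decomposing a solution of (\ref{master}) in the orthonormal Laplacian eigenbasis $\{\xi^j\}$ exactly as in the proof of Claim~\ref{average} produces the uncoupled pairs $(u_j,v_j)$; for $j\geq 2$ each entry of $(u_j(t),v_j(t))$ is a linear combination of $e^{\Lambda^\pm_j t}$ — possibly multiplied by $t$ in the degenerate case~4 — and hence tends to $0$. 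Therefore $u_j(t),v_j(t)\to 0$ for all $j\geq 2$, which is precisely synchronisation, and $x_i(t)\to u_1(t)/\sqrt{n}$, $y_i(t)\to v_1(t)/\sqrt{n}$ for every vertex $i$, where the $\lambda_1=0$ mode $(u_1/\sqrt n,\,v_1/\sqrt n)$ solves $\dot x=y$, $\dot y=-\omega^2 x$.

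It remains only to identify that limiting orbit. By the computation closing the proof of Claim~\ref{average}, $u_1(0)/\sqrt n=\bar x(0)$ and $v_1(0)/\sqrt n=\bar y(0)$, so the limit is the harmonic orbit through $(\bar x(0),\bar y(0))$, whose closed form is $x(t)=\bar x(0)\cos\omega t+\frac{\bar y(0)}{\omega}\sin\omega t$, $y(t)=\bar y(0)\cos\omega t-\bar x(0)\omega\sin\omega t$; substituting gives the claim. I do not anticipate a genuine obstacle — the substance was already spent in deriving the $\Lambda^\pm_j$ and their signs. The one point deserving a word of care is that the right-hand side of the claimed limit is itself time-dependent, so ``$x_j(t)\to\cdots$'' must be read as $x_i(t)-\bigl(\bar x(0)\cos\omega t+\frac{\bar y(0)}{\omega}\sin\omega t\bigr)\to 0$; this holds because that difference is exactly $\sum_{j\geq 2}u_j(t)\,\xi^j_i$ (the $\xi^1$-components agreeing identically), a finite sum of exponentially decaying terms.
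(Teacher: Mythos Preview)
Your argument is correct and follows the same route as the paper: specialise to $B=0$ so that $k(\lambda_j)>0$, conclude via the classification that $\mathrm{Re}(\Lambda^\pm_j)<0$ for $j\ge 2$, and then appeal to Claim~\ref{average} to identify the synchronised orbit as the harmonic oscillator trajectory through the averaged initial data. Your additional remarks (handling the repeated-root case~4 and clarifying the meaning of convergence to a time-dependent limit) only make the argument more explicit than the paper's own proof.
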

\begin{proof}
When $B=0$ we have $k(\lambda_j)>0$ for $j>0$ and therefore eigenvalue pairs $\Lambda^{\pm}_j$ fall into classes 3) - 5). Consequently $Re(\Lambda^{\pm}_j)<0$, and using the argument presented in the proof of Claim \ref{average} this implies the system synchronises. By the same argument it must synchronise to the solution of the underlying linear system with average initial conditions. 
\qed
\end{proof}

Let us summarise the results of this subsection. We first verified that synchronisation of the network (\ref{linear}) necessarily implies synchronisation to the solution of the linear planar system (\ref{short}) whose initial conditions are given by averaging initial conditions across the network. We then calculated the eigenvalues associated with the choice quadratic Hamiltonian (\ref{quadratic}) to deduce that in this case the network (\ref{master}) always synchronises to the ``average'' periodic orbit of the underlying planar Hamiltonian system (\ref{ham}). For quadratic hamiltonians this answers the problem raised at the end of section 2 and explains how each action variable, $I_i$, must be adjusted during synchronisation. At first glance it is tempting to speculate that the same ``averaging'' result extends to all $k>2$. Using a simple argument we will show this not to be the case however; determining the synchronised state in nonlinear networks (\ref{master}) appears to remain an extremely complicated problem that in general may only be solved numerically.

\subsection{Elliptic curves and nonlinear theory} \label{ssnonlinear}

General nonlinear networks (\ref{master}) can also be written in vector form
\[  \left( \begin{array}{c} \dot{\mathbf{x}}  \\ \dot{\mathbf{y}}  \end{array} \right) = \left( \begin{array}{cc} B\cdot \mathbf{I}_n+D_x \Delta  & C\cdot \mathbf{I}_n \\ -A\cdot \mathbf{I}_n & -B\cdot \mathbf{I}_n + D_y \Delta  \end{array} \right)  \left( \begin{array}{c} \mathbf{x}  \\ \mathbf{y} \end{array} \right)  + \left( \begin{array}{c} \frac{\partial \widetilde{H} }{\partial \mathbf{y}}  \\ -\frac{\partial \widetilde{H}}{\partial \mathbf{x}} \end{array} \right)  \]
with
\[ \widetilde{H} = \sum_{i=1}^n \left[ h(x_i,y_i) - \frac{1}{2}(Ax^2_i + 2B x_iy_i + Cy^2_i)  \right] \]
and $A,B,C \in \mathbb{R}$ chosen so that $\widetilde{H}$ contains no quadratic terms. Making the same decomposition as in the proof of Claim \ref{average} transforms the system
\begin{equation} \label{nonlinear} \dot{u}_j = (B+\lambda_j D_x) u_j + C v_j + \frac{\partial \widetilde{H}}{\partial v_j}, \quad \dot{v}_j = -Au_j +(-B+ \lambda_j D_y) v_j - \frac{\partial \widetilde{H}}{\partial u_j},\end{equation}
where $\widetilde{H}=\widetilde{H}(u_1,...,u_n,v_1,...,v_n)$ is now considered as a function of the $u_j,v_j$. Explicitly, denoting the $i$th component of the eigenvector $\xi^j$ by $\xi^j_i$, we have
\[ \widetilde{H}  =  \sum_{i=1}^n \left[ h\left(\sum_{j=1}^n u_j \xi^j_i,\sum_{j=1}^n v_j \xi^j_i \right) - \frac{A}{2}\left(\sum_{j=1}^n u_j \xi^j_i \right)^2 - B \left(\sum_{j=1}^n u_j \xi^j_i \right)\left(\sum_{j=1}^n v_j \xi^j_i \right) - \frac{C}{2}\left(\sum_{j=1}^n v_j \xi^j_i \right)^2  \right] \]
indicating that coupling has been shifted to the nonlinear terms of (\ref{nonlinear}) in this representation of the network. By the same argument used in the proof of Claim \ref{average}, if the system is to synchronise to a periodic orbit $(u(t),v(t))$ of the underlying planar Hamiltonian system (\ref{ham}), we must have $u_j(t),v_j(t) \to 0$ as $t \to \infty$ for all $j>1$. Unlike the linear case however, the pair of equations corresponding to the zero mode $\lambda_1=0$ do not decouple from the rest
\[ \dot{u}_1 = Bu_1 + C v_1 + \frac{\partial \widetilde{H}}{\partial v_1}, \quad \dot{v}_1 = -Au_1 -Bv_1 - \frac{\partial \widetilde{H}}{\partial u_1}, \]  
and so are not of the form (\ref{ham}) unless $t=\infty$. Since the asymptotic behaviour of the trajectory $(u_1(t),v_1(t))$ determines the periodic orbit $(u(t),v(t))$, identifying its dependence on initial conditions amounts to solving the entire network (\ref{nonlinear}). This means in general there will be no obvious way to identify the synchronised orbit of the network because each system will have its own functional value of $\widetilde{H}$. In the remainder of this subsection we therefore focus on several examples of elliptic curves and use a numerical approach to determine some properties of synchronisation in these special cases. For simplicity we only study pairs of coupled oscillators ($n=2$).          

{\em Case $k=3$}. If we suppose that the level sets of $h$ contain a continuous family of closed curves then the planar Hamiltonian system (\ref{ham}) has two critical points, a centre and a saddle, which may be chosen (without loss of generality) at $(-1,0)$ and $(1,0)$, respectively. This implies $P_3(x) = -x^3/3 + x$ so that $h$ is of the form
\begin{equation} \label{elliptic} h(x,y) = \frac{1}{2} y^2 -\frac{1}{3} x^3 + x   \end{equation}
and the network (\ref{master}) becomes 
\begin{equation} \label{ellipticnetwork} \dot{x}_i = y_i + D_x \sum_{j=1}^n W_{ij} (x_j- x_i) , \quad \dot{y}_i = x^2_i - 1 + D_y \sum_{j=1}^n W_{ij} (y_j - y_i) . \end{equation}
Numerical simulations of this network for a pair of coupled planar systems are presented in Figure \ref{figure1} and demonstrate that, whilst the network synchronises, the resulting periodic trajectory can not be obtained by averaging initial conditions $(x_1(0),y_1(0))$ and $(x_2(0),y_2(0))$. The time-dependent synchronisation coefficient (absolute distance between a specified pair of trajectories) for the pair $(x_1(t),y_1(t))$ and $(x_2(t),y_2(t))$ approaches zero in exponential time reflecting synchronisation of the network, but oscillates wildly when calculated for either $(x_i(t),y_i(t))$ and the average periodic orbit. This is in contrast with the quadratic case where all the possible combinations of time-dependent synchronisation coefficients decay to zero in exponential time. On the basis of extensive numerical simulations with a large variety of initial conditions, parameters values, and large $n>2$ (data not shown) we conjecture that synchronisation always occurs for generic elliptic curves $h(x,y) =y^2/2 + P_3(x)$ where there exists a single family of periodic orbits, provided that $D_x \approx D_y$. As expected however, the synchronised trajectory is not obtained by averaging initial conditions and so already for $k=3$ we see a departure from the quadratic case $k=2$.     

\begin{figure}
\centering
  \includegraphics[width=\textwidth]{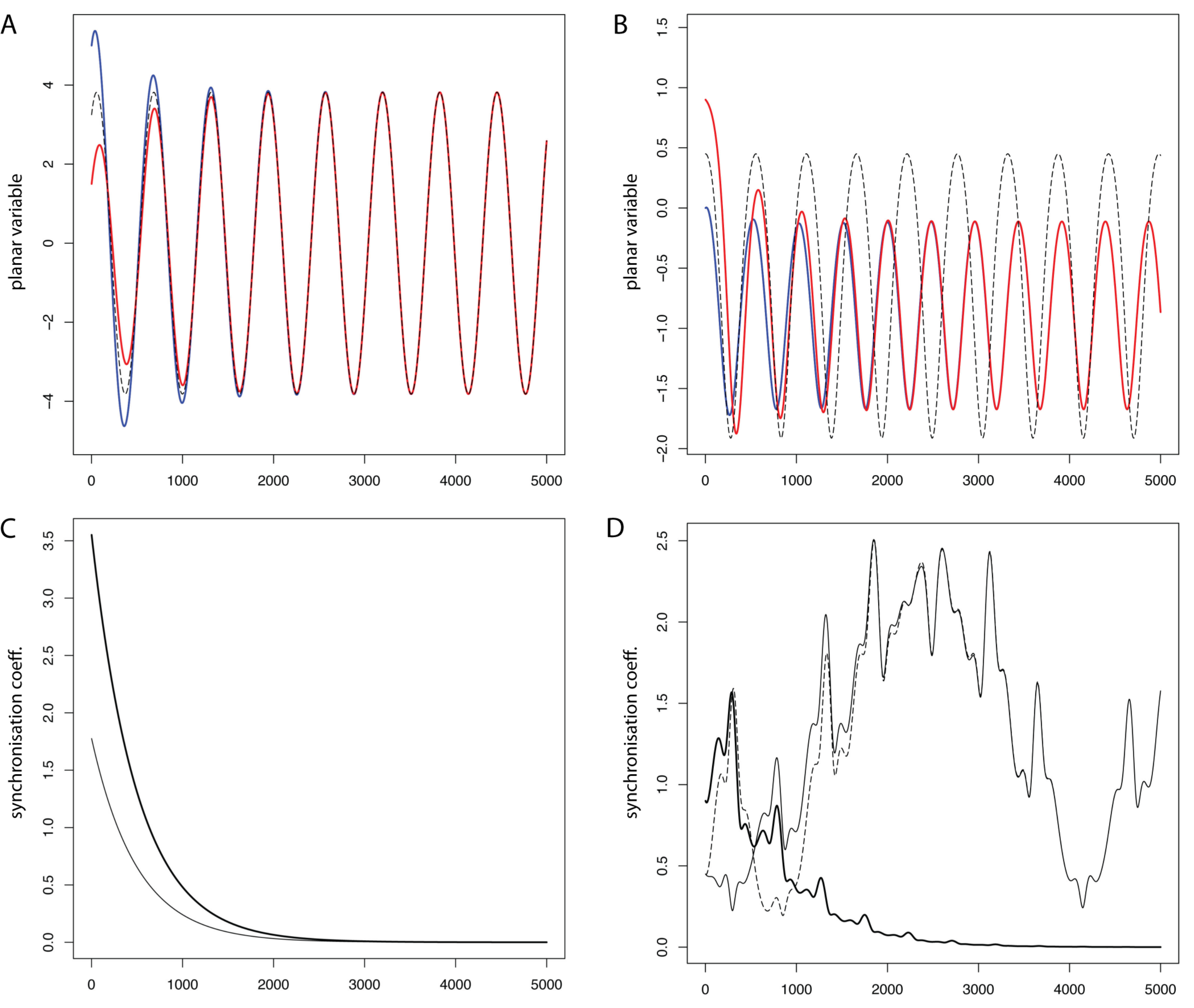}
  \caption{ \label{figure1} Examples of synchronisation. A) Quadratic Hamiltonian (\ref{quadratic}) with $\omega = 1$ results in synchronisation of trajectories $x_1(t)$ (blue) and $x_2(t)$ (red), which approach the periodic trajectory of the underlying planar Hamiltonian system obtained by averaging initial conditions $(x_1(0),y_1(0)) = (5.0,2.3)$ and  $(x_2(0),y_2(0)) = (1.5,1.7)$ (black dashed line). B) Elliptic Hamiltonian (\ref{elliptic}) results in synchronisation of trajectories $x_1(t)$ (blue) and $x_2(t)$ (red), but these remain far from the periodic trajectory obtained by averaging initial conditions $(x_1(0),y_1(0)) = (0.0,0.0)$ and  $(x_2(0),y_2(0)) = (0.9,0.0)$ (black dashed line). Synchronous behaviour for C) quadratic and D) elliptic Hamiltonians can be visualised by calculating time-dependent synchronisation coefficients for each pair: $(x_1,y_1)$ and $(x_2,y_2)$, bold; $(x_1,y_1)$ and the average periodic orbit, regular line; $(x_2,y_2)$ and the average periodic orbit, dashed line. All equations were integrated using the Runge-Kutta method with a time step of $0.01$ and diffusion coefficients were set to $D_x=D_y =0.1$.}
\end{figure}

In the quadratic case one can confirm that, for any pair of initial conditions $(x_i(0),y_i(0))$ ($i=1,2$), the total Hamiltonian (\ref{totalham}) is always monotonically decreasing along trajectories. This turns out not to be true for elliptic curves where it is possible to find initial conditions such that the total Hamiltonian 
\[ H(x_1,y_1,x_2,y_2) = h(x_1,y_1) + h(x_2,y_2) \]
$increases$ in time, i.e., $H(t+t')>H(t)$ for some $t,t'>0$. Indeed, the total derivative of $H$ along the trajectories of (\ref{ellipticnetwork}) with $n=2$ is given by 
\[ \frac{dH}{dt} = -D_y(y_1 -y_2)^2 +D_x(x_1-x_2)(x_1^2 - x_2^2)\]
and may be positive provided initial conditions are chosen appropriately. Let, for example, $(x_1(0),y_1(0))=(0,0)$ lie on the periodic orbit passing through the origin and $(x_2(0),y_2(0))$ lie on a periodic orbit (i.e., $-2/3 < h(x_2(0),y_2(0)) < 2/3$) with $0<x_2(0) < 1$ and $D_y y_2^2(0)> D_x x_2^3(0)$. One can check that these inequalities are mutually compatible and this choice of initial conditions yields $dH/dt(0)>0$. Moreover, provided $D_x \approx D_y$ are sufficiently large, trajectories will not have time to leave the region with $dH/dt > 0$ before synchronisation occurs. This observation (which has been verified numerically) implies an overall increase in the value of $H$ upon synchronisation, indicating that the total Hamiltonian (\ref{totalham}) is not always decreasing as one might originally suspect. Similar exceptions to this naive assumption, which holds for the quadratic case, can be found in cases $k>3$.   

{\em Case $k=4$}. In this case the generic $h$ is of the form
\[ h(x,y) = \frac{1}{2} y^2 + \frac{a}{4}x^4 + \frac{b}{3}x^3 + \frac{c}{2} x^2 \]
with $a \neq 0$. There are five types of continuous families of closed curves contained within the level sets of $h$ depending on the parameters $(a,b,c)$ (see pg. 106 in \cite{Chr00}). To study synchronisation in an example where there can exist multiple families of periodic orbits we choose to numerically investigate networks with the planar Hamiltonian
\[ h(x,y) = \frac{1}{2}y^2 + \frac{1}{4}x^4 - \frac{1}{2} x^2 ,\]
whose level sets correspond to closed curves surrounding a separatrix (Figure \ref{figure2}A). There are three critical points of this Hamiltonian, two centres at $(-1,0)$ and $(1,0)$, and a saddle at the origin $(0,0)$. The separatrix given by $h(x,y)=0$ passes through the saddle and separates three families of periodic orbits: one contained in each of the two lobes, consisting of concentric ovals surrounding each centre, and a family of larger curves that emanate outwards from the separatrix. Numerical simulations suggest that, whenever initial conditions are such that all vertices of the network begin on periodic solutions of the {\em same} family, synchronisation always occurs in the presence of diffusion $D_x \approx D_y$ just as it did for $k=3$ (data not shown). Once again the resulting synchronised orbit can not be obtained by averaging initial conditions. On the other hand, a quite peculiar form of synchronisation arises when initial conditions determine periodic orbits in {\em different} families. As indicated in Figure \ref{figure2}, when a pair of planar systems are initialised on periodic orbits lying inside different lobes of the separatrix (Figure \ref{figure2}A) they will not synchronise for small values of the diffusion coefficients (Figure \ref{figure2}B). However, once the strength of diffusion surpasses a certain threshold the network undergoes a phase transition similar to that of the Kuramoto model \cite{Kur84} and synchronises to a periodic orbit in one of the respective families (Figure \ref{figure2}C). Even more surprising is the fact that, as the value of the diffusion coefficient increases further still, there is a second phase transition for the same initial conditions where the synchronised periodic ``jumps'' across the saddle and into the other lobe (Figure \ref{figure2}D)! A related phenomenon occurs if initial conditions lie on either side of the separatrix (i.e., one inside a lobe and the other on a larger amplitude orbit outside, Figures \ref{figure2}E and \ref{figure2}F). When the strength of diffusion is increased the trajectory beginning on the large orbit is pulled inside the separatrix and on to a member of the family inside one of the lobes (Figures \ref{figure2}G and \ref{figure2}H). Repeated simulations with randomised initial conditions suggest that the size of the diffusion coefficient relative to the distance between initial conditions appears to determine the time and location of the crossing, and ultimately whether or not the network synchronises. This again is reminiscent of the heterogenous Kuramoto model where distributions amongst families of periodic orbits are replaced by distributions of intrinsic frequencies.

\begin{figure}
\centering
  \includegraphics[width=0.65\textwidth]{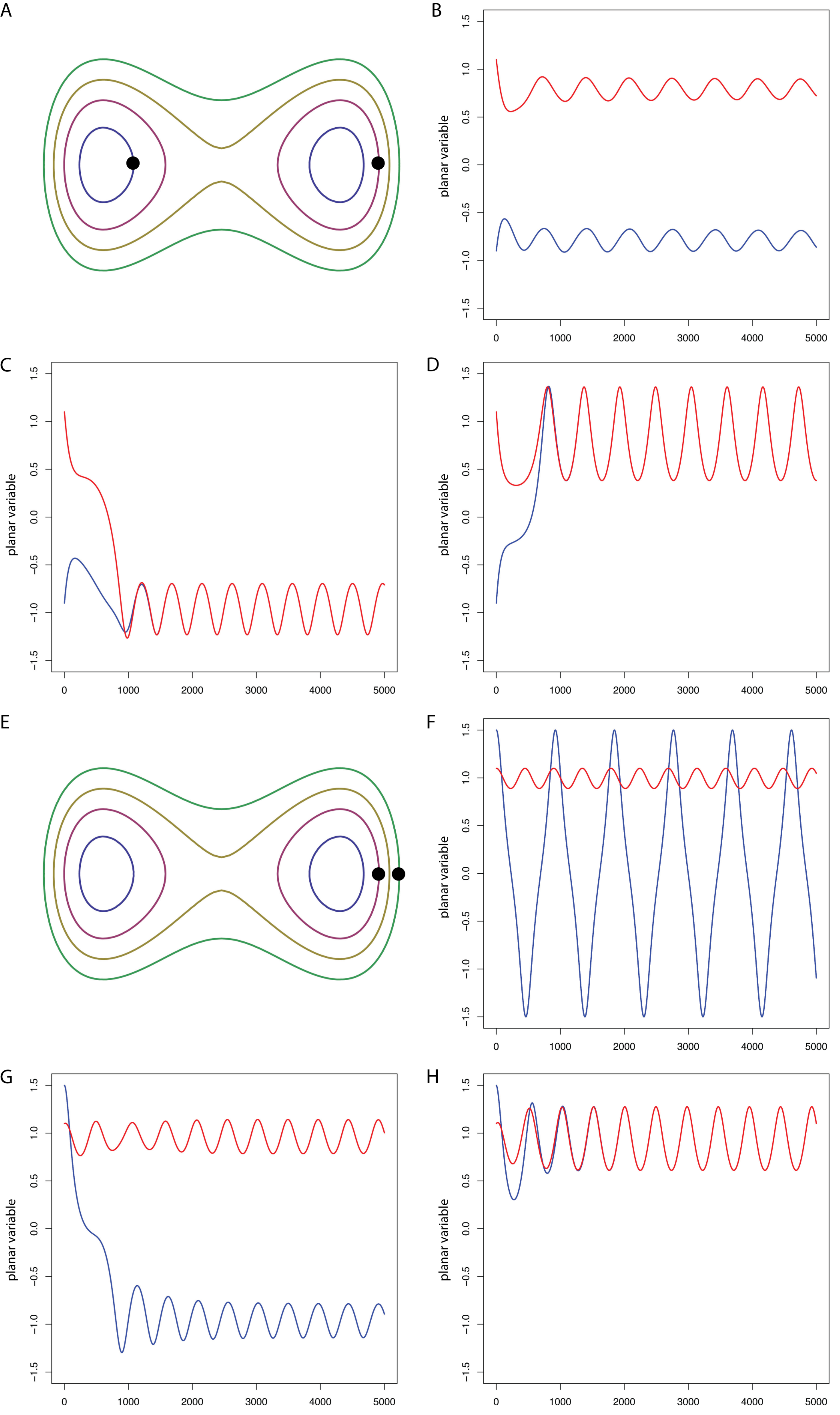}
  \caption{ \label{figure2} Synchronisation of trajectories initialised on different families of periodic orbits separated by a separatrix. A) Initial conditions $(x_1(0),y_1(0)) = (-0.9,0.0)$ and  $(x_2(0),y_2(0)) = (1.1,0.0)$ for first three panels where synchronisation behaviour depends on the strength of diffusion: B) $D_x=D_y=0.3$, C) $D_x=D_y=0.4$, and D) $D_x=D_y=0.5$. E) Initial conditions $(x_1(0),y_1(0)) = (1.5,0.0)$ and  $(x_2(0),y_2(0)) = (1.1,0.0)$ for last three panels where synchronisation behaviour depends on the strength of diffusion: F) $D_x=D_y=0.0$, G) $D_x=D_y=0.1$, and H) $D_x=D_y=0.2$. The planar variable $x_1(t)$ is always in blue with $x_2(t)$ always in red . Equations were integrated using the Runge-Kutta method with a time step of $0.01$.}
\end{figure}

Another peculiar type of behaviour present even in this simple choice of model with $n=2$ is the phenomenon of {\em oscillation quenching} \cite{Kos13}. It seems that when initial conditions are symmetric about the saddle, oscillations that persist in the absence of diffusion (Figure \ref{figure3}A) become damped when diffusion increases and trajectories are pulled toward different centres inside the separatrix (Figure \ref{figure3}B). As diffusion coefficients increase further this desynchronised state of the network makes a transition toward a synchronised steady state where both vertices occupy the saddle (Figures \ref{figure3}C and \ref{figure3}D). This form of oscillation quenching, which almost certainly depends on the initial conditions $(x_1(0),y_1(0))$ and $(x_2(0),y_2(0))$ being exact mirror images about the saddle, is a scenario where a nonlinear network does in fact synchronise to the ``average orbit'', albeit one that is constant rather than time-periodic. It is example of {\em amplitude death} \cite{Kos13}. That we already observe such a variety of phenomena for the case $k=4$, $n=2$ implies that with a generic choice of hamiltonian the nonlinear network (\ref{master}) may exhibit a spectrum of exotic behaviours. A systematic investigation of these phenomena by numerical means is not possible however, as we have seen that different behaviours invariably depend on the types of Hamiltonian, initial conditions and diffusion coefficients. It might be feasible to deduce synchronisation or oscillation quenching conditions for a particular class of Hamiltonians without recourse to numerical integration, but in general the absence of attractors in the underlying planar system makes it incredibly hard to predict what the resulting solution will be. A likened challenge would be to determine synchronous solutions of coupled chaotic systems \cite{Boc02}.           

\begin{figure}
\centering
  \includegraphics[width=\textwidth]{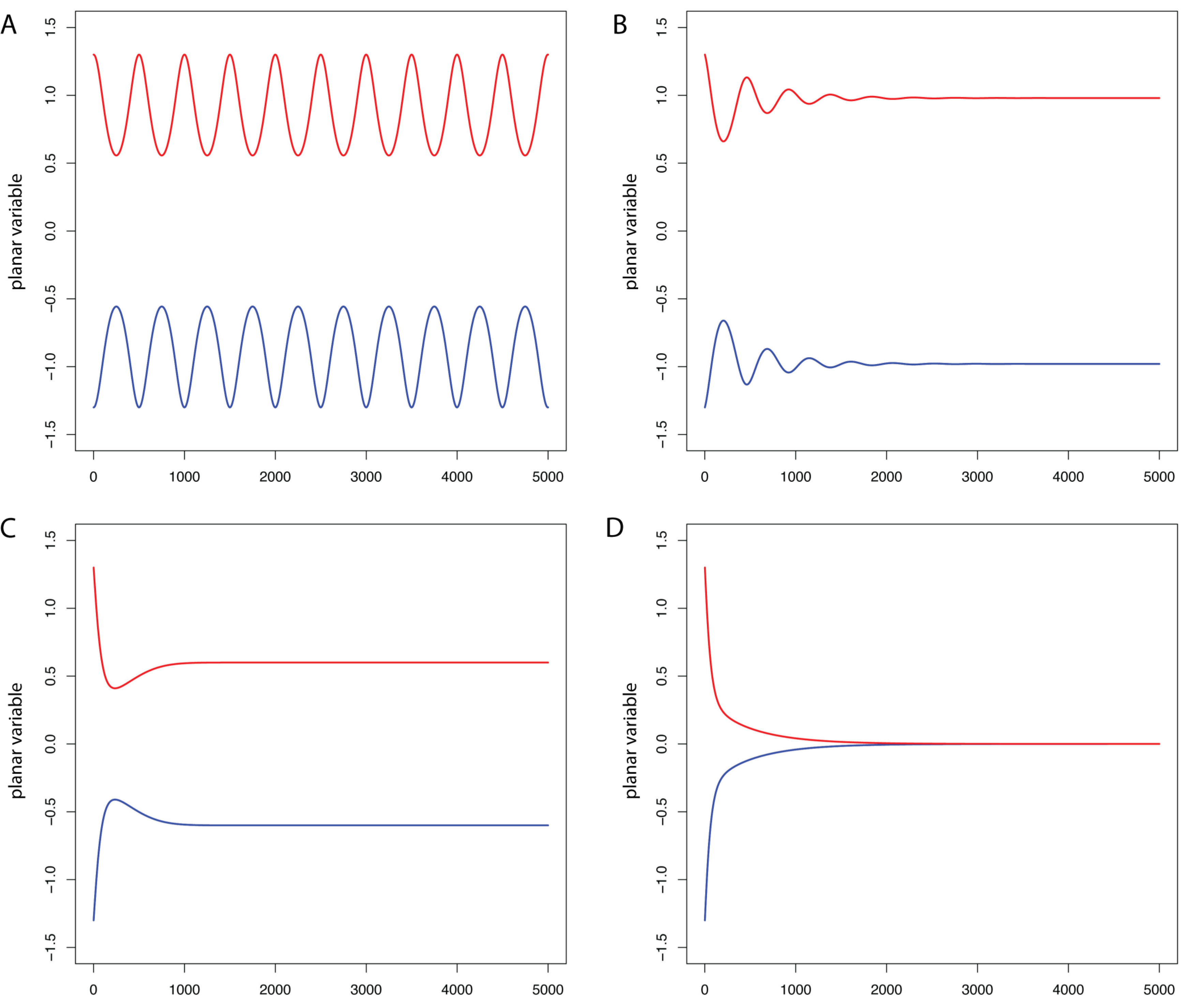}
  \caption{ \label{figure3} Examples of oscillation death. Initial conditions $(x_1(0),y_1(0)) = (-1.3,0.0)$ and  $(x_2(0),y_2(0)) = (1.3,0.0)$ produce anti-phase oscillations in the absence of diffusion A) and various forms of oscillation death with increasing diffusion coefficients: B) $D_x=D_y=0.1$, C) $D_x=D_y=0.4$, and D) $D_x=D_y=0.6$. The planar variable $x_1(t)$ is always in blue with $x_2(t)$ always in red. Equations were integrated using the Runge-Kutta method with a time step of $0.01$.}
\end{figure}

\section{Pattern formation}

As in the previous section, when discussing pattern formation we only consider networks of the form (\ref{master}) that are homogeneous ($h_i = h$ for all $i \in V$). On top of this it will be useful to introduce terminology distinguishing a {\em homogeneous state} from a {\em heterogeneous state}, which are properties of a trajectory rather than the network structure. Quite simply, we say the network is in a homogeneous state if it is synchronised in the usual sense so that $(x_i,y_i) = (\bar{x},\bar{y})$ for all $i \in V$. It is in a heterogeneous state otherwise. These are standard terminologies in the pattern formation literature. Thus, whilst synchronisation describes the evolution of a heterogeneous state toward a homogeneous state, pattern formation is said to occur when a network perturbed from a homogeneous state evolves towards a stable heterogeneous state. Pattern formation has been studied ever since Turing's conception of the idea in 1952 \cite{Tur52} and is classically concerned with situations where the homogeneous state corresponds to an asymptotically stable equilibrium of the underlying planar system. Many variations have appeared since Turing's work (e.g. \cite{Mim78,Zha95,Mei00,Wyl07,Pet15}), but in all cases to date the planar system underlying the network has always admitted an attractor and is therefore quite different from (\ref{master}). In this section we shall introduce an analogous concept of Turing instability for networks where the underlying planar system is Hamiltonian. Although, like Turing, we begin by studying instabilities originating from stationary homogeneous states it will quickly become apparent that the more natural concept is homogeneous states that are allowed to vary in time.                        

\subsection{Stationary homogeneous states} \label{stationary}

It does not take anything more than a minor adaptation of Turing's work \cite{Tur52} to derive conditions sufficient for diffusion-driven instability of a non-hyperbolic equilibrium, and we do so here. In analogy with the standard procedure for activator-inhibitor networks (see Chapter 14.3 in \cite{Mur89}) we assume that the underlying planar system (\ref{ham}) admits at least one centre, $(x^*,y^*)$, which implies $\mbox{det}[\mbox{Hess}(h)(x^*,y^*)]>0$. Linearisation of the full system (\ref{master}) about the stationary homogeneous state $(x_i,y_i) = (x^*,y^*)$ for all $i \in V$ results in the same linear dynamics that were studied in subsection \ref{sslinear}  
\[ \left( \begin{array}{c}
\dot{\mathbf{\delta x}}  \\
\dot{\mathbf{\delta y}}  \end{array} \right) =\left( \begin{array}{cc} h_{xy}\cdot \mathbf{I}_n + D_x \Delta & h_{yy}\cdot \mathbf{I}_n  \\ -h_{xx}\cdot \mathbf{I}_n  & -h_{xy}\cdot \mathbf{I}_n  + D_y \Delta \end{array} \right)  \left( \begin{array}{c}
\mathbf{\delta x}  \\
\mathbf{\delta y}  \end{array} \right). \]
Here subscripts refer to mixed partial derivatives of the Hamiltonian evaluated at the centre: $h_{xx} = \partial^2h(x^*,y^*)/\partial x^2$, $h_{xy} = \partial^2h(x^*,y^*)/\partial x \partial y$,  and $h_{yy} = \partial^2h(x^*,y^*)/ \partial y^2$. We may therefore make identifications $A=h_{xx}$, $B=h_{xy}$, $C=h_{yy}$ and use the classification of eigenvalues provided in subsection \ref{sslinear} to evaluate stability of the stationary homogeneous state as a function of the diffusion coefficients. Given that classes 2) - 5) correspond to dynamics that can not be concretely classified as unstable, the only remaining option as a sufficient condition for instability is
\[ k(\lambda_j) = 4  D_xD_y \lambda_j^2+ 2 h_{xy} (D_x-D_y)\lambda_j  +  \mbox{det}[\mbox{Hess}(h)(x^*,y^*)]  <0 . \]
Finding the critical value of this parabola $k(\lambda_j)$ and imposing that it be less than zero yields
\begin{equation} \label{condition}  \frac{(D_x-D_y)^2(h_{xy})^2}{4D_xD_y} > \mbox{det}[\mbox{Hess}(h)(x^*,y^*)] , \end{equation}
which we recognise as the analogue of the Turing instability condition for networks of the form (\ref{master}). However we note that, since the Laplacian spectrum is discrete, the actual occurrence of the instability (appearance of a positive eigenvalue $\Lambda_j^+$) depends on the value of $\lambda_j$ that corresponds to $\Lambda_j^+$ crossing the horizontal axis. This is illustrated in Figure \ref{figure4}.

\begin{figure}
\centering
  \includegraphics[width=0.7\textwidth]{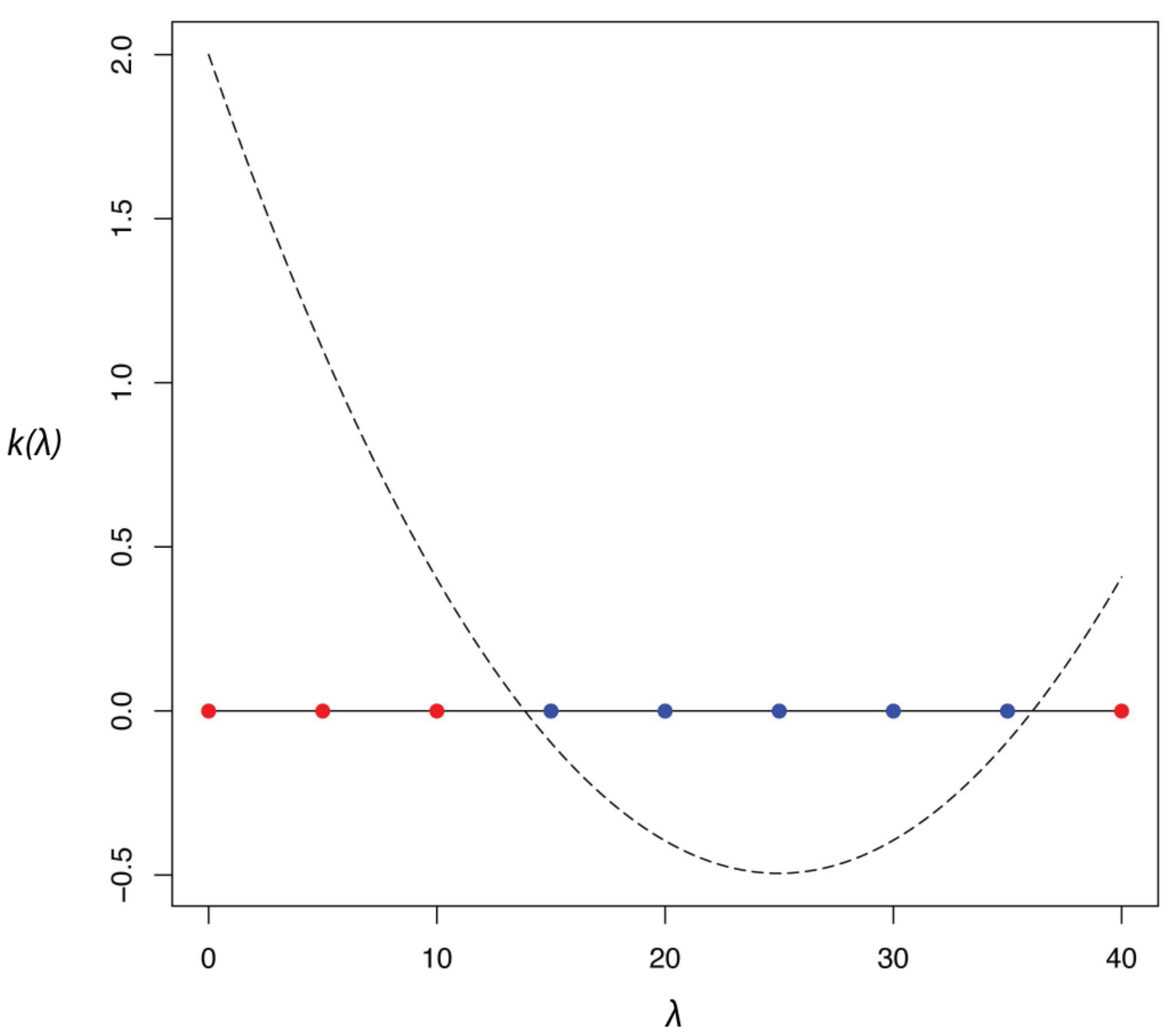}
  \caption{ \label{figure4} Example of a case where a subset of Laplacian eigenvalues $\lambda_j$ result in a $\Lambda_j^+$ whose real part crosses the horizontal axis to become positive. The parabola $k(\lambda)$ with $D_x = 0.001$, $D_y = 1.0$, $h_{xy} = 0.1$ and $\mbox{det}[\mbox{Hess}(h)(x^*,y^*)]=2.0$ is plotted as a dashed line and the solid line denotes the zero axis. In this example, eigenvalues are given by the formula $\lambda_j = 5(j-1)$ and values $\lambda_3=15,\lambda_4=20,\lambda_5=25,\lambda_6=30,\lambda_7=35$ that result in a negative value of $k(\lambda_j)$ and unstable modes are represented by blue dots. Eigenvalues $\lambda_0=0,\lambda_1=5,\lambda_2=10,\lambda_8=40$ resulting in $k(\lambda_j) \geq 0$ are coloured red.}
\end{figure}   

Of course, in the usual setting of activator-inhibitor networks \cite{Nak10} the centre $(x^*,y^*)$ is replaced by an asymptotically stable critical point of an underlying {\em dissipative} planar system so that the zero mode eigenvalues $\Lambda_1^\pm$ always have negative real part. This means that in the absence of diffusion the stationary homogeneous state remains stable under small perturbations, whilst in our case perturbations of the homogeneous centre can generate heterogeneous distributions of sustained oscillations. Homogeneity is preserved in both types of networks when diffusion coefficients are nonzero however, at least up to the point that the ratio $D_x/D_y$ crosses the threshold value for instability. The difference is that a homogeneous state of (\ref{master}) need not remain stationary since it is always susceptible to perturbations in the direction of the zero mode. An exception may arise when the perturbation corresponds to a symmetry operation under which the Hamiltonian and its critical point remain invariant, but this does not reflect the general case. Thus, we may still discuss pattern formation (i.e., breakdown of homogeneity as diffusion coefficients diverge) in networks of the type (\ref{master}), although it is more natural to do so from the point of view of time-dependent trajectories. We shall consider this problem in the next subsection. For now we simply remark that homogeneous states in networks of planar Hamiltonian systems remain homogeneous following perturbation provided that $D_x \approx D_y$, but may evolve towards a stable heterogeneous state when diffusion coefficients satisfy (\ref{condition}). Several authors have recently studied the effect of discrete network topology on pattern morphology \cite{Nak10,Asl14,Con16}.   
 
\subsection{Time-dependent homogeneous states} 
In the previous subsection we suggested that conventional Turing pattern formation (where patterns originate from a stationary homogeneous state) is not the natural object of study for networks of the form (\ref{master}) that do not admit asymptotically stable homogenous equilibria. Consequently, the relevant concept of pattern formation is one describing diffusion-driven instability of a {\em time-dependent} homogeneous state that we assume at any given time is a periodic orbit of (\ref{ham}). By analogy with Turing's work we want to deduce conditions for instability of this periodic homogeneous state. Closely related is the generation of Turing-type instabilities from a limit cycle \cite{Nak14,Cha15}. The authors of \cite{Cha15} argue that oscillation death, the type of oscillation quenching where an initially homogeneous state of diffusively coupled oscillators evolve toward a stationary heterogeneous state, is nothing more than a Turing instability for the first return map of the periodic homogeneous state. Thus, the concepts behind the master stability function may be tweaked slightly to rationalise a condition for instability of a periodic homogeneous state. By definition this is just the condition that diffusive coupling shifts the largest Floquet exponent above the horizontal axis (these are equivalent to the eigenvalues of the linearised first-return map). Floquet exponents are notoriously difficult to compute compared with eigenvalues $\Lambda_j^\pm$ however, and therefore approximated numerically for several different limit cycle oscillator networks in \cite{Cha15}. 

In contrast to the underlying planar system admitting an asymptotically stable limit cycle, here we have assumed a homogeneous state of the form $(x_i,y_i) = (\bar{x}(t),\bar{y}(t))$ for all $i \in V$ where $(\bar{x}(t),\bar{y}(t))$ is a periodic orbit of (\ref{ham}). Therefore all Floquet exponents necessarily lie on the horizontal axis when $D_x=D_y=0$. This mirrors the situation encountered in subsection \ref{stationary} where all eigenvalues $\Lambda_j^\pm$ were found to be purely imaginary in the absence of diffusion. Consequently, the homogeneous state can not be classified as stable unless we take nonzero diffusion coefficients. It is then reasonable to ask for conditions where diffusion results in the real part of at least one Floquet exponent becoming positive. The authors of \cite{Cha15} suggest using an analogue of condition (\ref{condition}) to determine when this occurs. For planar Hamiltonian systems the equivalent of this condition would be 
\begin{equation} \label{conditiontime}  \frac{(D_x-D_y)^2}{4D_xD_y} \left\langle \frac{\partial ^2 h(\bar{x},\bar{y})}{\partial x \partial y} \right\rangle ^2> \langle \mbox{det}[\mbox{Hess}(h)(\bar{x},\bar{y})] \rangle , \end{equation}
where angled parenthesis around a function denote the average of that function over one period of the periodic orbit $(\bar{x}(t),\bar{y}(t))$. This condition reduces to (\ref{condition}) when the periodic orbit collapses to a centre. In \cite{Cha15} it was pointed out that the analogue of criterion (\ref{conditiontime}) for limit cycles may return a larger domain of  instability than its counterpart (\ref{condition}), but numerical simulations suggested that the resulting patterns were identical to those obtained in the classical Turing region. Here we follow a different line of reasoning in support of a conjecture that is particular to networks where the underlying planar system is Hamiltonian. Namely, {\em a periodic homogeneous state becomes unstable when the homogeneous centre it surrounds becomes unstable}. We shall not formally verify this conjecture but only sketch out an argument for why one may expect it to be true, at least when the periodic orbit lies sufficiently close the the centre. Perturbations of the homogeneous centre take the form $(x_i, y_i) = ( x^*,y^*) + (\delta x_i, \delta y_i)$  where $(\delta x_i, \delta y_i) = (\delta x, \delta y) + (\delta \bar{x}_i, \delta \bar{y}_i)$ and $(\delta x, \delta y)$ is the piece of the perturbation shared by all the $(x_i, y_i)$. It follows that $(\bar{x}, \bar{y}) = ( x^*,y^*) +  (\delta x, \delta y)$ defines a periodic solution of (\ref{ham}) and therefore $(x_i , y_i) = (\bar{x}, \bar{y})$ for all $i \in V$ is a periodic homogeneous state of the network (\ref{master}). All periodic homogeneous states sufficiently close to the homogeneous centre can be written is this form and their perturbations $(x_i,y_i) = (\bar{x},\bar{y}) + (\delta \bar{x_i}, \delta \bar{y}_i)$ are perturbations of the homogeneous centre by construction. Consequently, if the homogeneous centre is unstable then the periodic homogeneous state is also, and so the conjecture is verified for all periodic orbits sufficiently close to the centre. 

As an illustration we consider the network (\ref{master}) equipped with a simple deformation of the elliptic curve Hamiltonian
\[ h(x,y) = \frac{1}{2} y^2 -\frac{1}{3} x^3 + x + \alpha x y , \quad \alpha > 0 . \]
This yields the system 
\[ \dot{x}_i = y_i + \alpha x_i + D_x \sum_{j=1}^n W_{ij} (x_j- x_i) , \quad \dot{y}_i = x^2_i - 1 - \alpha y_i + D_y \sum_{j=1}^n W_{ij} (y_j - y_i) , \]
and periodic homogeneous solutions persist provided $\alpha $ remains relatively small. The homogeneous centre is specified by
\[ x^* = -(\alpha^2 + \sqrt{\alpha^4 + 4})/2, \quad y^* = \alpha (\alpha^2 + \sqrt{\alpha^4 +4})/2 \]
and the condition (\ref{condition}) for instability becomes
\[ \frac{\alpha^2 (D_x - D_y)^2}{4D_xD_y} > \sqrt{\alpha^4 + 4} . \]   
Without loss of generality we can set $D_y = 1$ and solve this inequality for $D_x$. For example, when $\alpha = 0.1$ we have
\[ 0 < D_x < 0.00124687  \quad \mbox{or} \quad D_x > 802.009, \] 
but can not take the branch $D_x > 802.009$ because this would result in the critical value of the parabola $k(\lambda_j)$ being negative, which is inconsistent with our assumptions. We instead choose a value of $D_x = 0.001$ lying well within the allowed branch and deduce that, for these parameter values, the homogeneous centre is rendered unstable whenever the Laplacian admits an eigenvalue $\lambda$ satisfying
\[ k(\lambda) = (4 \times 0.001) \lambda^2 + 2 \times 0.1 \times (0.001-1) \lambda  + \sqrt{0.1^4 + 4}< 0. \]      
Consider the case where the graph underlying the network (\ref{master}) is complete and unweighted ($W_{ij} = 1$ for all $i,j \in V$) so that nonzero eigenvalues of the Laplacian are $\lambda_j = n/2$ for all $j >1$. Then instability arises whenever $28 < n < 72$ and in this case the condition is realised as a bound on the number of vertices. Numerical simulations confirm that the synchronisation coefficient diverges across the predicted domain (Figure \ref{figure5}). Of course, a connected network of any size and topology can be made to admit the same instability provided the weights $W_{ij}$ are scaled appropriately.

\begin{figure}
\centering
  \includegraphics[width=\textwidth]{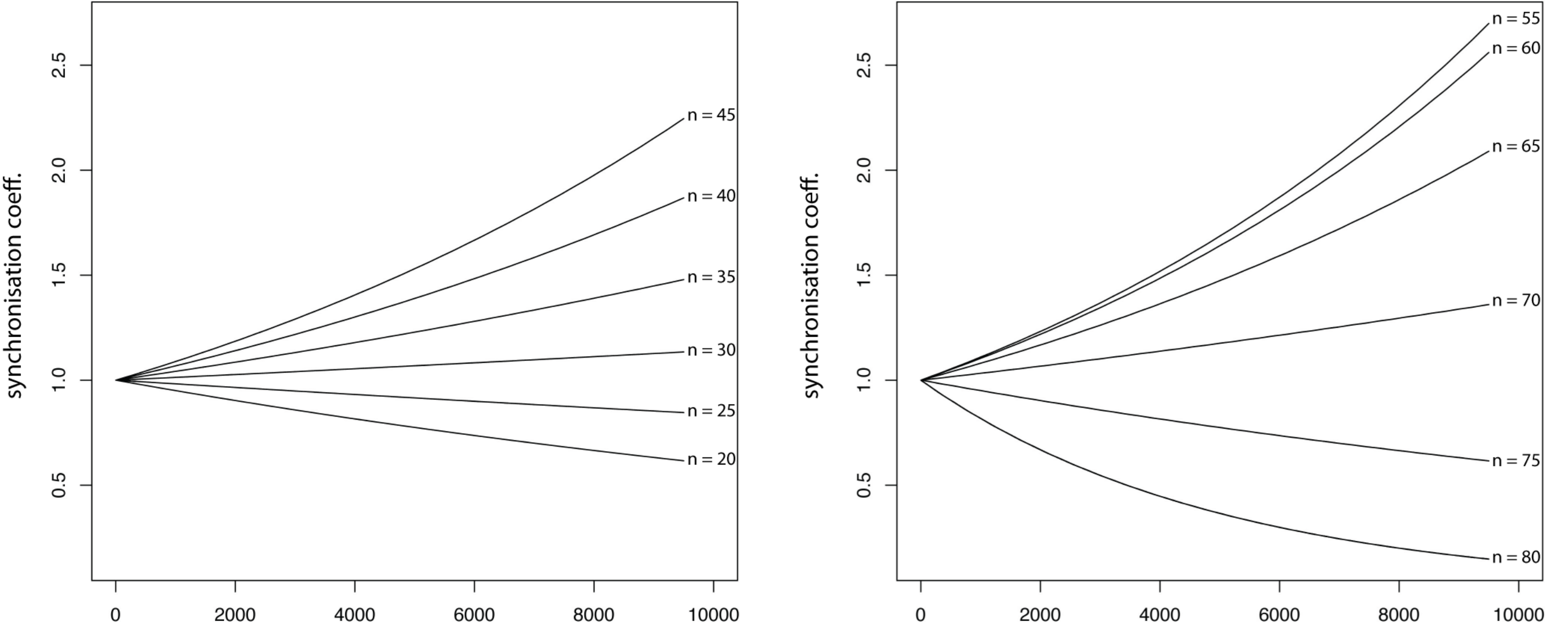}
  \caption{ \label{figure5} Diffusion-induced instability of periodic homogenous states. Turing-like instability is reflected by growth of the synchronisation coefficient, $\sum_{i,j = 1}^n \sqrt{(x_i(t) - x_j(t))^2 + (y_i(t) - y_j(t))^2}$ (normalised by initial conditions), for network sizes lying inside the domain $28 < n < 72$. Exponential decay for network sizes falling outside the unstable region infers synchronisation. Initial conditions $\{(x_j(0),y_j(0) )\}$ were randomly generated to lie on periodic orbits surrounding the centre of the planar system (\ref{ham}). Deformation of the elliptic curve Hamiltonian and parameters were as described in main text. Equations were integrated using the Runge-Kutta method with a time step of $0.01$.}
\end{figure}       

To close this section we describe a simple way of constructing networks of limit cycle oscillators from networks of planar Hamiltonian systems. That conditions for Turing instability appear relatively straightforward to establish in the case of the latter suggests a method for interpolating between these two systems might help overcome some of the challenges encountered with limit cycle oscillators \cite{Cha15}. The method we describe uses the fact that phase portraits of planar Hamiltonian systems are easy to characterise using level sets of the Hamiltonian and often we can describe precisely how these change following a small, dissipative perturbation. More specifically, consider instead of (\ref{master}) networks where the underlying planar system takes the form
\begin{equation} \label{perturbed} \dot{x}= \frac{\partial h (x,y)}{\partial y} + \epsilon f(x,y) , \quad \dot{y} = -\frac{\partial h(x,y)}{\partial x} + \epsilon g(x,y) \end{equation}
with $\epsilon >0$ a small perturbation parameter. The assumption on $h$ is that its level sets contain a family of closed curves. Chapter II.2 of \cite{Chr00} outlines a proof of the Poincar\'{e}-Pontryagin Theorem that says when at least one periodic orbits of the unperturbed Hamiltonian system ($\epsilon = 0$) persists as a limit cycle of the perturbed system ($\epsilon > 0$). In this case we say that the limit cycle is {\em generated} by the corresponding periodic orbit. Roughly, the Poincar\'{e}-Pontryagin criterion is that if the abelian integral
\[ I(h_0) = \int_{h(x,y)=h_0} f(x,y)dy - g(x,y) dx \]         
is not identically zero and satisfies $I(h_0^*) = 0$, $I'(h_0^*) \neq 0$ then there is a unique limit cycle of the perturbed system (\ref{perturbed}) generated by the periodic orbit in the level set $h(x,y) = h^*_0$. This provides a simple criterion for selecting an autonomous planar system whose limit cycles are well characterised by the planar Hamiltonian system (\ref{ham}). Turing instabilities for networks of the dissipative planar system (\ref{perturbed}) are likely to arise nearby to those of the network (\ref{master}).

\section{Concluding remarks}
In this paper we introduced a new class of complex networks (\ref{master}) that consist of diffusively coupled planar Hamiltonian systems. We studied in detail homogeneous versions of these networks, which display special synchronisation and pattern formation properties because planar Hamiltonian systems do not admit attractors. In particular, a number of novel problems arise naturally for this class of networks. These problems include determining which periodic orbits emerge following synchronisation and establishing whether Turing instability of a periodic homogeneous state is equivalent to instability of the homogeneous centre it contains. There are no precise equivalents of these problems for any related networks that have appeared in the literature previously (e.g. \cite{Are08,Nak10, Zan97, Ham99,Vir15,Nig13,Nig15,Nak14,Cha15}). We have made significant progress in some special cases, but obtaining general results remains an open challenge. 

Potential applications for networks (\ref{master}) reside in the fact that we may interpret them as arrangements of coupled oscillators. Of particular relevance are the coupled circadian oscillators that make up the suprachiasmatic nucleus (SCN) of the mammalian brain \cite{Liu97}. In \cite{Liu97} the authors propose that the SCN is made up of a {\em heterogeneous network} of limit cycle oscillators, each individual oscillator with its own intrinsic period. The period of the synchronised state of the SCN is obtained by averaging across these and necessarily only defined with respect to frequency since the network is heterogeneous. Our results on linear networks (\ref{linear}) provide an alternative model where individual oscillators still posses their own intrinsic period, but now on a {\em homogeneous network} where complete synchronisation results in an average period shared across the SCN. Heterogeneous distributions of periods on a homogeneous network are made possible because the underlying system admits a family of periodic orbits. Nonlinear corrections to the linear model may account for non-exact averaging observed in experiments \cite{Liu97}. 

To fully understand networks (\ref{master}) it will be important to move beyond the homogeneous case and consider heterogeneous networks. An accessible starting point would involve taking a heterogeneous distribution of quadratic Hamiltonians $h_i(x,y) = (y^2 + \omega_i^2 x)/2$ for $n$ natural frequencies $\omega_i \in \mathbb{R}$. A comparison between this system and the heterogeneous Kuramoto model \cite{Kur84} can be made by transforming to action-angle variables using $x_i =  \sqrt{2I_i/\omega_i}\cos \theta_i$ and $y_i = -\sqrt{2I_i \omega_i}\sin \theta_i$, which yields
\[ \dot{I}_i = 2 \sum_{j=1}^n \Delta_{ij} \sqrt{ I_iI_j} \left[ D_x \sqrt{\frac{\omega_i}{\omega_j}} \cos \theta_i \cos \theta_j + D_y \sqrt{\frac{\omega_j}{\omega_i}} \sin \theta_i \sin \theta_j  \right] \]
\[ \dot{\theta}_i = \omega_i - \sum_{j=1}^n \Delta_{ij} \sqrt{\frac{I_j}{I_i}} \left[ D_x \sqrt{\frac{\omega_i}{\omega_j}} \sin \theta_i \cos \theta_j - D_y \sqrt{\frac{\omega_j}{\omega_i}} \cos \theta_i \sin \theta_j   \right]   . \]
The second of these equations is reminiscent of the Kuramoto model, but has time-dependent weights $\sqrt{I_j/I_i}$ sitting in front of the coupling terms. This system is therefore an analogue of the Kuramoto model defined on a plastic network \cite{Sel02} where the weights of the network are updated in time according to a very specific learning rule. It may be possible to explore frequency synchronisation on this heterogenous network adapting standard techniques developed for the Kuramoto model.      

\section*{Acknowledgements}
DST is supported by a Research Fellowship from Peterhouse, Cambridge. 
%%\end{document}

\end{document}